\documentclass[prx,twocolumn,tightenlines,aps,epsf,showpacs,superscriptaddress,longbibliography]{revtex4-1}
\usepackage[pdftex]{graphicx}

\usepackage{dcolumn}
\usepackage{bm}
\usepackage{epsfig}
\usepackage{latexsym} 
\usepackage{amsmath}
\usepackage{amssymb}
\usepackage{color}
\usepackage{array}
\usepackage{bbm}
\usepackage{color}
\usepackage{array}
\usepackage{cancel}
\usepackage[dvipsnames]{xcolor}
\usepackage[normalem]{ulem}
\usepackage[mathscr]{euscript}
\usepackage{grffile}
\usepackage[colorlinks=true,allcolors=blue]{hyperref}%

\usepackage{graphicx}
\usepackage{physics}
\usepackage{bbm}
\usepackage{amsmath}
\usepackage{amsthm}
\usepackage{amssymb}

\usepackage{tikz}
\usepackage{hyperref}  
\hypersetup{colorlinks=true,citecolor=blue,linkcolor=blue} 

\usepackage[nameinlink,capitalize]{cleveref}
\usepackage{textcomp}
\usepackage{multirow}

\newtheorem{theorem}{Theorem}[section]
\newtheorem{lemma}[theorem]{Lemma}
\newtheorem{definition}[theorem]{Definition}

\newtheorem{problem}[theorem]{Problem}

\begin{document}
\title{Straddling-gates problem in multipartite quantum systems}
\author{Yuxuan Zhang }
\email{yuxuanzhang@utexas.edu}
\affiliation{Center for Complex Quantum Systems, The University of Texas at Austin, Austin, TX 78712, USA}
\date{\today}
\begin{abstract}
We study a variant of quantum circuit complexity, the binding complexity: Consider a $n$-qubit system divided into two sets of $k_1$, $k_2$ qubits each ($k_1\leq k_2$) and gates within each set are free; what is the least cost of two-qubit gates ``straddling'' the sets for preparing an arbitrary quantum state, assuming no ancilla qubits allowed? Firstly, our work suggests that, without making assumptions on the entanglement spectrum, $\Theta(2^{k_1})$ straddling gates always suffice. We then prove any $\text{U}(2^n)$ unitary synthesis can be accomplished with $\Theta(4^{k_1})$ straddling gates. Furthermore, we extend our 
results to multipartite systems, and show that any $m$-partite Schmidt decomposable state has binding complexity linear in $m$, which hints its multi-separable property. This result not only resolves an open problem posed by Vijay Balasubramanian, who was initially motivated by the ``Complexity=Volume” conjecture in quantum gravity, but also offers realistic applications in distributed quantum computation in the near future.

\end{abstract}
\maketitle
\section{Introduction}
    For decades, quantum circuit complexity has been an area of great interest for researchers across disciplines. Given a certain task, such as preparing a pure quantum state from the all-zero state, quantum circuit complexity addresses the cost of implementation (whether exactly or approximately) in terms of a set of minimum operations, which usually consists of single-qubit and two-qubit gates. However, depending on the scenario, not all quantum gates are created equally: in a near-stabilizer setting, for example, $T$ gates are weighted higher as a resource cost than Pauli operations, including the CNOT gates, since the simulation cost is polynomial in the number of stabilizers but exponential in that of the non-stabilizer ones \cite{bravyi2016improved}. Conversely, in a quantum hardware implementation, experimentalists tend to view CNOT gates as more expensive than the $T$ gates due to their higher noise rate \cite{arute2019quantum,pino2021demonstration,egan2020fault}. In both examples, one aims to minimize the use of higher cost operations.
    
    Despite its original purpose to study the interior volume of multiboundry wormholes \cite{Balasubramanian:2018hsu}, binding complexity can be best motivated from an experimentalist's perspective. One key obstacle of building a ``useful'' quantum computer is that current quantum hardware platforms, whether superconducting or trapped-ion based, are hard to scale~\cite{Preskill2018quantumcomputingin}. For instance, regardless of their high entangling gate fidelity \cite{pino2021demonstration,egan2020fault}, the cutting edge trapped-ion computers might soon hit a limit in qubit number, should they stick with the current $1d$ geometry \cite{bruzewicz2019trapped}. A natural solution is to have two or more smaller quantum computers working collaboratively \cite{beals2013efficient}, which leads to another concern: the unitary operations ``straddling'' between computers are unreliable. This is especially vital if one wants to perform physically long-range quantum operations or transduction between different physical platforms (for instance, from a microwave-controlled computation system to a photon based information carrier) \cite{lauk2020perspectives}. Overall, in an idealized model, we are only interested in the minimum number of total straddling gate cost to accomplish a task, and all local operations shall be considered free. 
    
    Following circuit complexity, the binding complexity, $\mathcal{C}_b$, is defined as the minimum amount of straddling two-qubit operations required to accomplish a certain given task \cite{Balasubramanian:2018hsu}. This concept quantifies the difficulty of distributing entanglement among multiple parties, and the original authors carried out detailed calculations on instances of physicists' interests. Nonetheless, the binding complexity for synthesizing an $arbitrary$ quantum state or unitary, which is natural to ask about from a quantum information perspective, has remained open \footnote{A version of the problem came up during a private conversation between Vijay Balasubramanian and Scott Aaronson, and the latter suggested a counting argument for estimating the lower bound}. 
    
    In this work, we first study the binding complexity for circuit synthesis in any bipartite quantum systems. We then extend the result to multipartite systems: there, through a case study of Schmidt decomposable states, we reveal binding complexity's potential in measuring entanglement. Lastly, we extend our discussion to future open problems. Unless otherwise specified, we take the all-zero state as the only reference state and restrict the straddling gate set to be any SU(4) gate, which can be decomposed with CNOT gates and single qubit rotations with a ``merely'' constant overhead \cite{10.5555/968879.969163}.

\section{Bipartite systems}\label{section:bs}
Quantum circuit synthesis is studied in many quantum computing subjects such as compiler designs \cite{khatri2019quantum}. For instance, given a desired quantum circuit, in order to reduce noise rate and to save computation time, a common practice is to figure out a minimum decomposition in the ``natural'' gate set of a quantum hardware. Here, we study the question in minimizing only the cost of straddling two qubit gates between two parties. This means any unitary operations within each party will be considered free.
\subsection{State preparation}
To begin with, let us consider the following task: Suppose Alice and Bob share $n$ qubits and they want to prepare some given quantum state, how can they minimize the number of straddling operations? Here, we give an asymptotically optimal result.
\begin{problem}[Bipartite state preparation (BSP) problem]
Consider a system with n qubits, divided into two halves of $k_1$, $k_2$ qubits each, where $k_1 \leq k_2$. The goal is to prepare some given pure state $\ket{\psi}$ on those qubits, via a circuit consisting of two-qubit gates. Gates within each half are free; accomplish the task with as few straddling gates between the halves as possible without using ancilla qubits.
\end{problem}
To gain some insight into this problem, observe that for any $\ket{\psi}$ defined on $n$ qubits and divided into region $A$ and $B$, there exists a Schmidt decomposition \cite{Orus:2013kga}:
\begin{equation}\label{sch-state}
    \ket{\psi} =\sum_{i} w_{i}\ket{\psi_{A}^{i}}\otimes\ket{\psi_{B}^{i}}
\end{equation}
where $w_i$'s denote Schemidt weights with normalization condition $\sum_i \abs{w_i^2} = 1$; additionally, $\{\ket{\psi_{A}^{i}}\}$ and $\{\ket{\psi_{B}^{i}}\}$ should both form orthonormal basis in $\mathcal{H}_{A}$ and $\mathcal{H}_{B}$. This requires the Schmidt rank 
$$r_S\leq \operatorname{min}(\abs{\mathcal{H}_{A}},\abs{\mathcal{H}_{B}});$$
i.e., in a $k_1$, $k_2$ bipartite system, there exists at most $2^{k_1}$ Schmidt weights. Intuitively, the entanglement space is practically homomorphic to a $k_1$-qubit system, with $r_S$ being the entanglement degree of freedom. Any local unitary operation within each subsystem alone would not contribute to the entanglement spectrum, whereas the operator Schmidt decomposition of any straddling $g\in \text{SU}(4)$ contains only a constant number of free parameters \cite{Balasubramanian:2018hsu}.
By this counting argument, $\Omega(r_S)$ gates are required to generate a dense set for state preparation. It turns out there exists a matching upper bound for $\mathcal{C}_b$:
\begin{theorem}[A solution to BSP problem]\label{QD}
$\mathcal{C}_b(\ket{\psi}) = \Omega(r_S)$ for any bipartite state $\ket{\psi}$ with Schmidt rank $r_S$.
\end{theorem}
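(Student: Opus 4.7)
The plan is a dimension-counting (parameter-counting) argument on the space of Schmidt spectra: a rank-$r_S$ state's spectrum has $r_S - 1$ independent real parameters, whereas each straddling gate contributes only a constant number of ``entanglement-relevant'' parameters; covering a dense subset of rank-$r_S$ spectra therefore requires $\Omega(r_S)$ straddling gates. The statement is understood in the generic-state sense, i.e., a dense subset of the Schmidt-rank-$r_S$ manifold saturates the bound (in line with the dense-set language already used in the excerpt).

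First, I would fix the invariant being counted. Writing $\ket{\psi}$ in its Schmidt form \eqref{sch-state}, any local unitary $V_A \otimes V_B$ merely rotates the Schmidt bases and leaves the weights $\{w_i\}$ unchanged, so the Schmidt spectrum lives in an $(r_S-1)$-dimensional simplex $\Sigma_{r_S}$, and the final local block in the circuit contributes zero dimensions to the image of the map ``circuit $\mapsto$ output spectrum.'' Second, I would invoke the Cartan (KAK) decomposition to write any straddling $g \in \mathrm{SU}(4)$ as $g = (u_1 \otimes u_2)\exp\!\bigl(i\sum_\mu a_\mu\,\sigma^\mu \otimes \sigma^\mu\bigr)(v_1 \otimes v_2)$, so that only three real ``non-local'' parameters $(a_x,a_y,a_z)$ of $g$ survive after absorbing $u_i, v_i$ into the neighboring free local blocks. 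This matches the ``constant number of free parameters in the operator Schmidt decomposition'' observation already made in the excerpt, and certifies that each straddling gate carries at most $O(1)$ entanglement-relevant degrees of freedom.

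Combining these ingredients, the smooth map $F$ from $m$-straddling-gate circuits to the output Schmidt spectrum factors through an effective parameter space of dimension $O(m)$. A Sard/rank-of-derivative argument then forces $\mathrm{im}(F) \subseteq \Sigma_{r_S}$ to have dimension at most $c\,m$ for a universal constant $c$; to cover a dense (positive-measure) subset of $\Sigma_{r_S}$ one needs $c\,m \geq r_S - 1$, yielding $\mathcal{C}_b = \Omega(r_S)$. The hard part is the third step: intermediate local unitaries, unlike the final one, do change the pre-state consumed by subsequent straddling gates and hence can affect the output spectrum. I would handle this by an inductive tangent-space argument, showing gate-by-gate that, because the spectrum is a local-unitary invariant, every variation of an intermediate local block can be reparameterized as a re-choice of Schmidt basis entering the next Cartan-normalized gate; consequently only the three non-local parameters of that gate open a genuinely new direction in $\Sigma_{r_S}$, and the reachable-spectrum tangent space grows by at most $O(1)$ per added straddling gate, which closes the counting bound.
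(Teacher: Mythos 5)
Your parameter-counting argument is, in spirit, exactly the paper's justification of the lower bound: the sentence immediately preceding the theorem observes that local unitaries leave the Schmidt spectrum untouched while the operator Schmidt decomposition of a straddling $\mathrm{SU}(4)$ gate carries only $O(1)$ free parameters, and concludes that $\Omega(r_S)$ gates are needed to generate a dense set. Your KAK normal form is that same observation made precise, and your reading of the statement in the generic/dense-set sense is not optional but necessary: as literally written (``for any bipartite state'') the bound is false, since the maximally entangled state $2^{-k_1/2}\sum_x\ket{x}_A\ket{x}_B$ has Schmidt rank $2^{k_1}$ yet is prepared with only $k_1$ straddling CNOTs after local Hadamards. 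That said, you have correctly located, and then not closed, the one step that makes the counting argument nontrivial. The claim that a variation of an intermediate local block ``can be reparameterized as a re-choice of Schmidt basis entering the next Cartan-normalized gate'' and therefore opens no new directions in $\Sigma_{r_S}$ does not follow as stated: varying $L_j$ changes the alignment between the current Schmidt basis and the physical qubit pair on which the next gate acts, and that alignment is a high-dimensional datum on which all downstream spectra depend. Your inductive tangent-space step is the entire content of the lower bound and is only asserted (to be fair, the paper does not prove it either; it devotes one sentence to the whole issue).

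The larger mismatch is that the proof the paper actually supplies for this theorem is the \emph{upper} bound: the three-step algorithm (prepare $\sum_i w_i\ket{A^i}$ locally in $A$, entangle to $\sum_i w_i\ket{A^i}\otimes\ket{B^i}$ with multiplexors, then rotate both Schmidt bases locally), with Lemma~\ref{bqe} bounding the only straddling step by $O(2^{k_1})=O(r_S)$ gates. The surrounding text (``matching upper bound,'' ``we showed $\mathcal{C}_b(\ket{\psi})=\Theta(r_S)$'') makes clear the theorem is meant to assert $\Theta(r_S)$, the $\Omega$ in the statement notwithstanding. Your proposal contains none of this constructive half, so it does not establish what the paper's own proof of this theorem establishes. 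A complete treatment needs both pieces: the Schmidt-basis construction for the $O(r_S)$ upper bound on every state, and a genuinely worked-out rank bound on the circuit-to-spectrum map for the $\Omega(r_S)$ lower bound on generic states.
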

We give an explicit algorithm to accomplish the task:
\begin{enumerate}
\item Assume computation basis $\{\ket{A^i}\}$ and $\{\ket{B^i}\}$, namely $\{\ket{0}^{\otimes k_1},...\ket{1}^{\otimes k_1}\}$ and $\{\ket{0}^{\otimes k_2},...\ket{1}^{k_2}\}$. Prepare the state $\sum_{i} w_{i}\ket{{A}^{i}}$ in region $A$, so that we obtain $\sum_{i} w_{i}\ket{A^{i}}\otimes\ket{0^{\otimes{k_2}}}$. 
\item Use straddling gates between $A$ and $B$ to construct $\sum_{i} w_{i}\ket{A^{i}}\otimes\ket{B^{i}}$. 
\item Apply local transformations $\sum_{j}\ket{\psi_{A}^{i}} \bra{{A}^{j}}$ and $\sum_{j}\ket{\psi_{B}^{i}} \bra{{B}^{j}}$ to each subsystems, and get back the desired state.
\end{enumerate}
\begin{figure}
    \centering
    \includegraphics[width=0.4\textwidth]{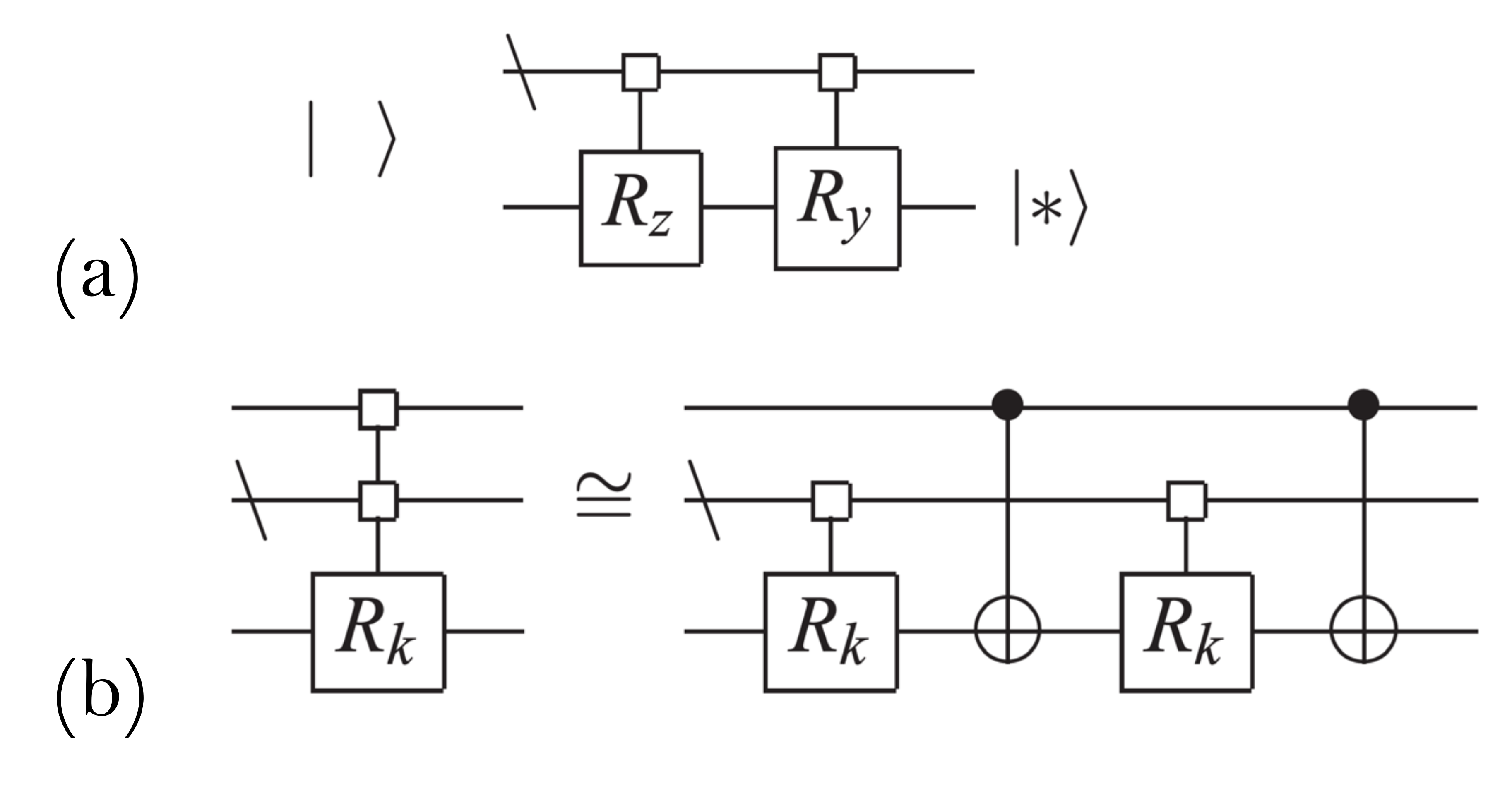}
    \caption{{\bf Disentangling a qubit} Top: For any input quantum state, one could disentangle any bit with two multiplexors; reversing this process gives an iterative algorithm for state preparation~\cite{10.5555/968879.969163,PhysRevLett.93.130502}. Bottom: decomposing a $c$-control multiplexor with two $(c-1)$-control multiplexors and two CNOT gates. In the diagram, the register with slash is an abbreviation for all control registers, white box denotes control bits and $R_k$ is a rotation gate along axis $k\in{y,z}$ with angles to be determined.}
    \label{DE}
\end{figure}
No straddling gate is used in step 1 and 3, and, crucially, the number of qubits we need to entangle in step 2 is only directly related to $r_S$. In other words, the Schmidt spectrum can be encoded with $\lceil\text{log}(r_S)\rceil$ qubits, which is independent of system size. The second step requires $O(r_S)$ straddling gates, a result from the following fact:
\begin{lemma}[Bipartite qubit entangling]\label{bqe}
Entangling two systems with $k_1$, $k_2$ qubits respectively (assuming $k_1\leq k_2$) requires no more than $O(2^{k_1})$ straddling \text{CNOT} gates.
\end{lemma}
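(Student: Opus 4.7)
The plan is to give an explicit construction achieving the stated bound. The entangling task is precisely step~2 of the algorithm for Theorem~\ref{QD}: starting from $\bigl(\sum_i w_i\ket{A^i}\bigr)\otimes\ket{0^{k_2}}$, produce $\sum_i w_i\ket{A^i}\ket{B^i}$, where $i$ ranges over at most $r_S\le 2^{k_1}$ values and $\{\ket{A^i}\}$, $\{\ket{B^i}\}$ are orthonormal computational-basis sets on $A$ and $B$. The observation I would exploit is that the labeling of these sets is not prescribed: the local unitary on $B$ in step~3 is constructed after the fact to send $\ket{B^i}\mapsto\ket{\psi_B^i}$, so any orthonormal family of computational-basis states in $B$ is an acceptable target for the straddling stage.

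Using this freedom, I would pair each $\ket{A^i}$ with the computational basis state $\ket{B^i}$ of $B$ whose first $k_1$ bits duplicate those of $\ket{A^i}$ and whose last $k_2-k_1$ bits are zero. Under this pairing the entangling transformation reduces to the classical copy $\ket{A^i}\ket{0^{k_2}}\mapsto\ket{A^i}\ket{A^i}\ket{0^{k_2-k_1}}$, which is implemented by a ladder of $k_1$ straddling CNOTs, one from each qubit of $A$ to its matched partner among the first $k_1$ qubits of $B$. By linearity the full superposition $\sum_i w_i\ket{A^i}\ket{0^{k_2}}$ is taken to $\sum_i w_i\ket{A^i}\ket{B^i}$ in the chosen labels, and orthonormality of the selected $\ket{B^i}$'s guarantees that extending $\ket{B^i}\mapsto\ket{\psi_B^i}$ to a unitary on $\mathcal{H}_B$ is well defined.

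Since the construction uses only $k_1\le 2^{k_1}$ straddling CNOTs, the bound claimed in the lemma follows immediately. The only real hurdle in the argument is the conceptual step of recognizing that the straddling segment need not land on any prescribed basis --- that burden is passed to the free local unitary of step~3. If one did insist on directly reaching a more general computational-basis target, the natural tool would be the uniformly-controlled gates depicted in Fig.~\ref{DE}; combined with the multiplexor recursion of~\cite{10.5555/968879.969163} this also delivers an $O(2^{k_1})$ straddling-CNOT count, reproducing the same asymptotic through a more general but less economical route.
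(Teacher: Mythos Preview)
Your argument is correct and takes a genuinely different---and far more economical---route than the paper. The paper establishes the lemma by running the multiplexor-based disentangling of Fig.~\ref{DE} directly: it sets up the recurrences $T(p,q)=2T(p,q-1)$ and $T(p,q)=2T(p-1,q)+2$ for the straddling-CNOT cost of a multiplexor with $p$ controls in $A$ and $q$ in $B$, solves them to $T(p,q)\le 2^{p}$, and sums over the $k_1$ qubits being entangled to obtain at most $2^{k_1}-2$ straddling CNOTs. You instead notice that step~2 of the three-step scheme is free to target \emph{any} orthonormal computational family on $B$, so the choice $\ket{B^i}=\ket{A^i}\ket{0^{k_2-k_1}}$ collapses the task to a transversal copy requiring only $k_1$ CNOTs. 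This not only proves the lemma but is exponentially stronger: combined with the elementary fact that a single straddling two-qubit gate can at most double the Schmidt rank, your construction in fact yields $\mathcal{C}_b(\ket{\psi})=\Theta(\lceil\log_2 r_S\rceil)$, tighter than the $\Theta(r_S)$ the paper claims in Theorem~\ref{QD}. The multiplexor route you allude to at the end is precisely what the paper does, and it retains independent value because the same cost function $T(p,q)$ feeds the unitary-synthesis recurrence~\eqref{reQSD}; but for the lemma itself your elementary construction is both simpler and sharper.
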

 Next, we make use of known quantum circuit synthesis techniques to prove Lemma~\ref{bqe}.
\begin{proof}
As shown in Fig.\ref{DE}, one can prepare a quantum state iteratively with multiplexors, a special type of quantum gates. A multiplexor can be thought as a generalization of controlled gates with multiple control qubits, which can be synthesized with a recursive construction. Notice that, not all control bits for the gates are within the same subsystem; consequently, one has to specify the locations of control and target bits to conclude their straddling costs. Assuming a non-trivial case where control bits can belong to two regions: let $p$ be the number of control bits in region $A$ and $q$ be that in $B$. Without loss of generality, also assume the target bit is in region $B$. 
We would like to derive a decomposition cost for multiplexors, denoted as $T(p,q)$; the construction in Theorem~\ref{DE} implies: 
\begin{align}\label{D1}
    T(p,q) &= 2T(p,q-1)
\end{align}
or
\begin{align}\label{D2}
    T(p,q) &= 2T(p-1,q) + 2 
\end{align}
 apparently $T(0,q) = 0$; the best known straddling cost, to the author's knowledge is $T(p,0) = (2p - 1)$~\cite{PhysRevA.52.3457}. Solving Eq.~\ref{D1} delivers $T(p,q) = (2p - 1)2^{q}$, whereas Eq.~\ref{D2} gives $T(p,q) = 2^{p+1}-2$. Following Shende et al., one could bring $T(p,q)$ further down to $2^p$ due to cancellations between the CNOT gates~\cite{1629135}. This suggests that, when a $p\leq q$, the best strategy is to perform reduction in the control space. 
 
 Next, we already know that entangling each qubit requires two multiplexors; therefore, the binding complexity for preparing any state in the $k_1 + k_2$ system in Lemma~\ref{bqe} is: 
 \begin{align}
     \mathcal{C}_b(\ket{\psi}) &\leq 2[T(k_1-1,k_2) + T(k_1-2,k_2) + ...]\\
     &\leq 2^{k_1}-2
 \end{align}
which completes the proof. 
\end{proof}

Putting everything together, we showed $\mathcal{C}_b(\ket{\psi}) = \Theta(r_S)$. The result is in fact rather satisfying: the complexity of combining two quantum system is linear in the entanglement degree of freedom, which can be especially useful in preparing low bipartite entanglement states. 
In fact, Lemma \ref{bqe} can be thought as a looser bound of Theorem \ref{QD}, without making any assumption on a state's entanglement spectrum. In a full Schmidt rank case, our result suggests $\mathcal{C}_b(\ket{\psi}) =\Theta(2^{k_1})$. For a small $k_1$ that is independent of $n$, the straddling cost does not depend on $n$; on the other hand, the binding complexity maximizes at $k_1 = n/2$, where $\mathcal{C}_b(\ket{\psi}) =\Theta(2^{n/2})$.

\subsection{Unitary decomposition}
 One could further question about the bipartite binding complexity for applying a given unitary transformation, $\mathcal{C}_b(U(2^n))$. In this section, we prove the tight answer is $\Theta(4^{k_1})$. 

\begin{theorem}[Bipartite unitary decomposition]
Any given $n$-qubit unitary can be constructed with $O(4^{k_1})$ straddling gates in an equal bipartite system.
\end{theorem}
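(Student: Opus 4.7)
The plan is to lift the state-preparation proof of Theorem~\ref{QD} to unitaries by iterating the Cosine--Sine / Quantum Shannon Decomposition along the $A$-register. Splitting $U\in\mathrm{U}(2^n)$ on one $A$-qubit via CSD yields $U=(L_1\oplus L_2)\,\Sigma\,(R_1\oplus R_2)$ with $\Sigma$ a multiplexed single-qubit rotation on the split qubit; applying the simultaneous-diagonalization trick of QSD to each of the two block-diagonal factors, the expression becomes a product of three single-qubit multiplexors on the split $A$-qubit (each with the remaining $n-1$ qubits as controls) together with four $(n-1)$-qubit unitaries on the un-split qubits. Since each multiplexor has its target in $A$ and all $k_2$ qubits of $B$ among its controls, Lemma~\ref{bqe} bounds its straddling cost by $O(2^{k_2})$, independent of the number of residual $A$-controls.

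Recursing $k_1$ levels deep exhausts the $A$-register and leaves residual $k_2$-qubit unitaries on $B$ that are free. The resulting recurrence $C(n,k_1)\le 4\,C(n-1,k_1-1)+O(2^{k_2})$ with base $C(k_2,0)=0$ unrolls in its naive form to $C(n,k_1)=O(4^{k_1}\cdot 2^{k_2})$, a factor of $2^{k_2}$ larger than the claim. \emph{The hard part of the proof is shaving off this overhead.} In the equal-bipartite setting I expect this to require replacing the single-qubit CSD by a block-level CSD along the whole $A$-register: view $U$ as a $2^{k_1}\times 2^{k_1}$ block matrix of $2^{k_2}\times 2^{k_2}$ entries, decompose it into $O(4^{k_1})$ two-block (block-Givens) rotations, each conjugated by free multiplexed $\mathrm{U}(2^{k_2})$ operations on $B$, and then arrange these rotations so that consecutive ones share their $B$-dependent multiplexor data. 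The CNOT-cancellation arguments of Shende et al.~already invoked in Lemma~\ref{bqe} should then amortize the per-rotation straddling cost down to $O(1)$, delivering the $O(4^{k_1})$ bound.

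The matching $\Omega(4^{k_1})$ lower bound comes from a parameter count analogous to the one sketched for Theorem~\ref{QD}: a circuit with $m$ straddling $\mathrm{SU}(4)$ gates interleaved with arbitrarily many free local layers has image of real dimension at most $15m+(m+1)(4^{k_1}+4^{k_2}-2)$, which must cover the $\Theta(4^n)$ parameters of $\mathrm{U}(2^n)$; since $k_1\le k_2$, this forces $m=\Omega(4^{k_1})$.
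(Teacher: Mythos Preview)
Your high-level plan---recurse the Quantum Shannon Decomposition down the $A$-register until the residual unitaries live entirely on $B$---is exactly the paper's approach, and your lower-bound paragraph matches the paper's counting argument. The gap is in how you cost the three multiplexors produced at each level.

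You bound each multiplexor by $O(2^{k_2})$ because its target sits in $A$ while all $k_2$ qubits of $B$ appear as controls. That is the naive symmetric reading of the $T(p,q)$ analysis in Lemma~\ref{bqe}, but it throws away the very freedom that makes the problem tractable: local gates inside $A$ are free. In the multiplexor recursion of Fig.~\ref{DE}, peeling off a control that lies in the \emph{same} region as the target costs zero straddling CNOTs. Since here the target is in $A$, you may strip all $p-1$ residual $A$-controls for free, paying only the doubling $T'(p-1,q)=2\,T'(p-2,q)$; the base case $T'(0,q)$---a multiplexed single-qubit rotation with target in $A$ and all controls in $B$---has operator Schmidt rank~$2$ across the cut and can be realized with $O(1)$ straddling CNOTs (e.g.\ swap the target into $B$ with three CNOTs, apply the now-local-target multiplexor at cost $T(1,q-1)\le 2$, swap back). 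Hence the correct multiplexor cost is $O(2^{p-1})$, \emph{independent of $q$}, and the recurrence becomes
\[
\mathcal{C}_b(p,q)\;\le\;4\,\mathcal{C}_b(p-1,q)+3\cdot O(2^{p-1}),\qquad \mathcal{C}_b(0,q)=0,
\]
which solves to $\mathcal{C}_b(k_1,k_2)=O(4^{k_1})$ directly. This is precisely what the paper does (it simply reuses the symbol $T(p-1,q)$ for this cost). Your ``hard part''---the block-level CSD with amortized cancellations---is therefore unnecessary; the overhead you were trying to shave never appears once the multiplexor is costed correctly.
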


\begin{proof}
As shown in Fig. \ref{QSD}, an arbitrary $n$-qubit quantum operator can be implemented recursively by a circuit containing three controlled rotations and four generic $(n-1)$-qubit operators \cite{1629135}. We adopt their approach and solve the recurrence for the bipartite binding complexity. The straddling cost, $\mathcal{C}_b(p,q)$, of a $(p+q)$-qubit unitary has the relationship:
\begin{figure}
    \centering
    \includegraphics[width=0.48\textwidth]{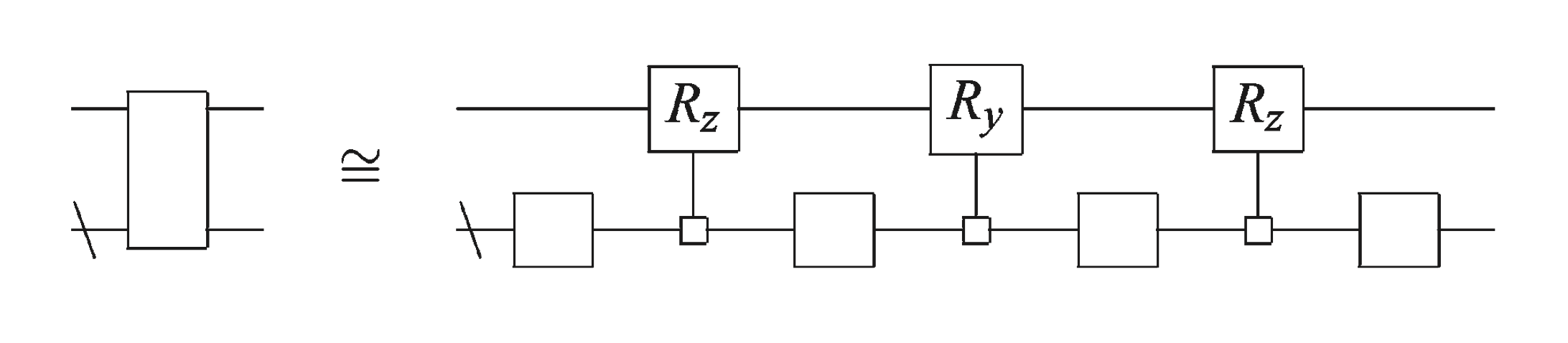}
    \caption{{\bf QSD process for an arbitrary unitary} The diagram shows a recursive construction of a arbitrary unitary $U$, also know as the ``quantum Shannon decomposition (QSD)'' \cite{1629135}.}
    \label{QSD}
\end{figure}
\begin{align}\label{reQSD}
    \mathcal{C}_b(p,q) \leq 4\mathcal{C}_b(p-1,q) + 3T(p-1,q)
\end{align}
We have previously obtained a upper bound for decomposing any mutiplexor, $T(p,q)$. By induction, one can prove the following inequality for $C$:
\begin{align}\label{inQSD}
    \mathcal{C}_b(p,q) \leq & 4^{p-l}(\mathcal{C}_b(l,q) + 3\cross2^{p-1})\\
                &-3\cross2^{p}
\end{align}
$C(0,q)=0$ is the end of the induction since the unitary would be fully in region $A$. For an euqal bipartite system, inserting values $p=k_1$ and $q = k_2$ concludes $\mathcal{C}_b(U(2^n)) = O(4^{k_1})$.

\end{proof}
The lower bound for unitary decomposition can be shown with a counting argument as well: the SU($2^n$) has $O(4^n)$ parameters; between every straddling gate, one can perform a pair of local unitary operations, which contains $O(4^{k_1}+4^{k_2}) = O(4^{k_2})$ parameters. Therefore $\Omega(4^{k_1})$ straddling gates are required. 

\section{Multipartite systems}\label{section:ms}
In this section we extend our discussion to multipartite systems. Foremost, we extend our binding complexity results to for general state preparation and unitary decomposition; secondly, we discuss binding complexity as an indicator for robustness of entanglement, which can be used to distinguish equivalence classes under stochastic local operations and classical communication (SLOCC); lastly, we perform a case study on the binding complexity's Schmidt decomposable states and comment on their multi-separability.
\subsection{Overview}
Let's consider a $m$-partite system with $k_1,k_2,...k_m$ qubits on each party, where $k_1+k_2+...+k_m = n$ and $k_1\leq k_2\leq...\leq k_m$. Then for any constant $m$, the following fact holds:

\begin{theorem}[Multipartite state preparation] \label{lb}
Arbitrary quantum state preparation in the system requires $\Theta(2^{n-k_m})$ straddling gates.
\end{theorem}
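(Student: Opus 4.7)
The bound extends Theorem~\ref{QD} in the natural way: collapsing parties $1,\dots,m-1$ into a single ``bloc'' of $n-k_m$ qubits reduces the problem to a bipartite instance whose smaller side has size $n-k_m$ (at least when $k_m\ge n-k_m$), and the plan is to show that this reduction is essentially tight once the multipartite structure of the bloc is tracked.

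For the upper bound, I would apply the iterative disentangling algorithm of Fig.~\ref{DE} to $\ket{\psi}$, choosing the disentangling order so that the $n-k_m$ qubits lying outside party $m$ are peeled off first; once all such qubits are in $\ket{0}$, the remaining $k_m$ qubits inside party $m$ can be brought to $\ket{0^{k_m}}$ with operations internal to party $m$, at zero straddling cost. Each of the $n-k_m$ disentanglement steps uses two multiplexors, and the multipartite-straddling cost of each is controlled by the obvious generalization of $T(p,q)$ from Lemma~\ref{bqe}: reducing a control that lives in a party different from the target's contributes two straddling CNOTs, whereas reducing a control inside the target's own party is free. Summing these contributions over the disentanglement steps and invoking the Shende~et~al.\ cancellations~\cite{1629135} collapses the geometric series and yields the announced $O(2^{n-k_m})$ upper bound.

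For the lower bound, I would regard the $m$-partite system as bipartite across the cut that separates party $m$ from the bloc of the other $n-k_m$ qubits, and reuse the Schmidt-spectrum counting argument behind Theorem~\ref{QD}. Three ingredients are used: the Schmidt rank of a generic $\ket{\psi}$ across this cut saturates at $2^{\min(k_m,n-k_m)}$; local unitaries on either side do not touch the Schmidt spectrum (a fortiori, multipartite-local operations within the bloc do not either); and every two-qubit gate straddling this cut contributes only a constant number of Schmidt-spectrum parameters. Together they force $\Omega(2^{\min(k_m,n-k_m)})$ multipartite-straddling gates across the cut, matching $2^{n-k_m}$ in the regime $k_m\ge n-k_m$.

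The main technical obstacle is the upper-bound bookkeeping: a naive summation of the multipartite-multiplexor costs across the $n-k_m$ disentanglement steps produces an extra factor of $n-k_m$, so the Shende~et~al.\ cancellations have to be carried through the multipartite recursion explicitly, tracking how ``free'' reductions along the target's own party interact with ``charged'' reductions that cross party boundaries. A secondary subtlety is the lower bound when $n-k_m>k_m$: the single-cut Schmidt argument only gives $\Omega(2^{k_m})$, so one would either argue that the worst-case family of states responsible for the lower bound already lives in the $k_m\ge n-k_m$ regime, or combine Schmidt-rank bounds across several compatible bipartite cuts to recover $\Omega(2^{n-k_m})$.
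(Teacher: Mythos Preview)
Your upper bound is the paper's: disentangle the $n-k_m$ qubits outside party $m$ via the multiplexor recursion and then finish inside party $m$ for free. The paper's sketch is terser (it just asserts ``the total two-qubit straddling gate count in doing so goes asymptotically $\sim 2^{n-k_m}$''), but the content and the reliance on the Shende~et~al.\ cancellations are the same.

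The lower bound is where you diverge, and the ``secondary subtlety'' you flag is in fact the main point. Collapsing parties $1,\dots,m-1$ into a bloc and reading off the bipartite Schmidt rank across $\{m\}$ versus the rest gives only $\Omega\bigl(2^{\min(k_m,\,n-k_m)}\bigr)$, which undershoots whenever $k_m<n/2$ (e.g.\ three equal parties). Your proposed patch of combining several bipartite Schmidt-rank bounds does not close the gap either: the Schmidt rank across \emph{any} bipartite cut is at most $2^{n/2}$, so no union of such bounds can reach $2^{n-k_m}$ for $m\ge 3$. The paper instead counts parameters in the \emph{iterated} Schmidt decomposition
\[
\ket{\psi}=\sum_{i_1,\dots,i_{m-1}} w_{i_1}\,w_{i_1,i_2}\cdots w_{i_1,\dots,i_{m-1}}\,\ket{\psi_1^{i_1}}\otimes\cdots\otimes\ket{\psi_m^{i_1,\dots,i_{m-1}}},
\]
where $(i_1,\dots,i_{m-1})$ ranges over $2^{k_1}\cdots 2^{k_{m-1}}=2^{n-k_m}$ labels. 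The associated weight data are unchanged by local unitaries on each party, and a single straddling $\mathrm{SU}(4)$ gate contributes only $O(1)$ free parameters, so $\Omega(2^{n-k_m})$ straddling gates are forced---with no case split on whether $k_m\ge n-k_m$. In short, the paper's lower bound lives genuinely in the multipartite invariant space rather than in any one bipartite Schmidt spectrum; that is the idea your proposal is missing.
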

To quickly validate this, notice that when $m\rightarrow{n}$, the bound asymptotically approaches $\Omega(2^n)$, the typical two-qubit gate complexity for an $n$-qubit state; on the other hand, $m=2$ recovers the equal bipartite result $\Theta(2^{k_1})$. Rigours proofs can be established with tools introduced in Sec.~\ref{section:bs}, and a sketch of the idea is given below.

For the lower bounds, the same counting argument works nicely: one could perform Schmidt decomposition between each party with a increased order in their qubit number:
\begin{align}\label{mul-sch-state}
    \ket{\psi} &=\sum_{i_1} w_{i_1}\ket{\psi_{1}^{i_1}}\otimes\ket{\psi_{2,..m}^{i_1}}\\ 
    &=\sum_{i_1,i_2} w_{i_1}w_{i_1,i_2}\ket{\psi_{1}^{i_1}}\otimes\ket{\psi_{2}^{i_1,i_2}} \otimes\ket{\psi_{3,..m}^{i_1,i_2}}\\ 
     &=\sum_{i_1,..i_{m-1}} w_{i_1}..w_{i_1,..i_{m-1}}\ket{\psi_{1}^{i_1}}\otimes..\otimes\ket{\psi_{m}^{i_1,..i_{m-1}}}
\end{align}
The total degree of ``straddling'' freedom in this summation would be of order $\Omega(2^{n-k_m})$, even after considering orthogonality constraints. Therefore, $\Omega(2^{n-k_m})$ straddling gates are required. For the upper bound, observe that disentangling all other qubits from the $k_m$ qubits can, once again, be done with multiplexors, and the total two qubit straddling gate count in doing so goes asymptotically $\sim 2^{n-k_m}$, which completes the proof.

Notice that, unlike in the bipartite state preparation case, we have not made assumption on the given state's Schmidt spectrum, due to the complication in describing entanglement in a multipartite system. Nevertheless, it would be of future interest to connect binding complexity with so-called matrix product state (MPS) ~\cite{perez2006matrix} representation, since it is widely used to compress low entanglement state in condensed matter physics and indirectly encoded quantum simulations~\cite{foss2021holographic,niu2021holographic,zhang2022qubit}. As a natural generalization of Theorem~\ref{QD}, the direct relationship between a state's MPS rank and its circuit complexity should be expected.
\subsection{Robustness of Entanglement}
Multipartite entanglement has been studied in abundant contexts: many-body physics~\cite{perez2006matrix}, quantum information~\cite{bruss2002characterizing}, cryptography~\cite{PhysRevA.60.910}, and even quantum gravity~\cite{Susskind:2014rva}. A great amount of effort has been concerned with characterizing quantum correlation between systems by studying their entropy as well as other generalized, Von Neumann entropy-based measures~\cite{Bagchi_2015, Thapliyal_1999,bruss2002characterizing}. Nevertheless, it was proposed that binding complexity can serve as an alternative approach to quantify entanglement~\cite{Balasubramanian:2018hsu}, and we elaborate on this possibility below. 

Since entropy is often used as the golden measure for entanglement in bipartite systems, we start by demonstrating a toy example, revealing its dissimilar with binding complexity. Consider an entangled state $\ket{\psi_{AB}} = \sqrt{1-\epsilon}\ket{00}+\sqrt{\epsilon}\ket{11}$, the entanglement entropy for qubit $A$ is 
$$S_A = -\text{Tr}(\rho_A \text{log} \rho_A) = (\epsilon-1)\text{log}(1-\epsilon)-\epsilon \text{log}{\epsilon} $$
which is $\epsilon$ dependent; yet the state consists $\mathcal{C}_b = 1$ no matter what the value of $\epsilon$ is. Binding complexity is only concerned with the existence of entanglement in $\ket{\psi}$, whereas entropy also cares how close $\ket{\psi}$ is to a separable state. Shall we compare quantum information as water flowing between regions in a piping system, entropy measures the magnitude of water across a subsystem, whereas binding complexity only cares about the maximum capacity of pipes connecting sub-regions. 

Consequentially, binding complexity tend to be a better probe if the entanglement \emph{structure} is of interest. In the $m=n$ setting (each party is a qubit), the Greenberger-Horne-Zeilinger (GHZ) states are famously viewed as a different entanglement class from the W states (for these states, each subsystem is a qubit):

\begin{align}\label{ghz_w}
    &|\psi_{GHZ}\rangle = \dfrac{1}{\sqrt{2}} (|0\cdot \cdot \cdot 00\rangle +|1\cdot \cdot \cdot 11\rangle )\\
    &|\psi_{W}\rangle = \dfrac{1}{\sqrt{n}} ( |0\cdot \cdot \cdot 01\rangle + |0\cdot \cdot \cdot 10\rangle + \cdot \cdot \cdot + |1\cdot \cdot \cdot 00\rangle)
\end{align}

In the special case of $n=3$, these states can be shown to represent two distinct equivalence classes by demanding state conversion through stochastic local operations and classical communication (SLOCC) \cite{PhysRevA.62.062314}. This inequivalence is reflected on their physical property: the GHZ state is multi-separable; that is, its reduced density matrix becomes separable upon partial tracing any of its parties. Intuitively, a separable density matrix represents a classical ensemble of product states that can be prepared with local transformations and classical communication (LOCC) \cite{Thapliyal_1999}. This means no quantum coherence stands between parties after the partial trace. On the other hand, a W state maintains two body entanglement even after ``losing'' any of its parties. However, calculating the bipartite entanglement entropy suggests $S_{\text{GHZ}}>S_{\text{W}}$, which means entropy fails to capture a key characteristic in this example: the robustness of entanglement.

Conversely, a W state after any party being partial traced cannot be written into such form, suggesting quantum entanglement is still shared within at least some surviving parties. So, what does binding complexity say about these states? Here, since we want to be precise about constants, we focus on the counting of CNOT gates. It was previously proved that \cite{vznidarivc2008optimal}:

\begin{lemma}
In a three qubit system, $\mathcal{C}_b(\ket{\psi_{\text{W}}}) = 3$ and $\mathcal{C}_b(\ket{\psi_{\text{GHZ}}}) = 2$.
\end{lemma}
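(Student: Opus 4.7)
The plan is to establish matching upper and lower bounds on the CNOT count for each of the two states.

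For the upper bounds, I would exhibit explicit constructions. The GHZ state is obtained from $\ket{000}$ by $H$ on the first qubit followed by $\mathrm{CNOT}_{1\to 2}$ and $\mathrm{CNOT}_{2\to 3}$, for a total of two straddling gates. For the W state, I would give (or cite) a three-CNOT circuit, for instance an $R_y$ rotation on qubit~1 whose angle is chosen so that the $\ket{1}$ amplitude is $1/\sqrt{3}$, followed by a controlled-$R_y$ between qubits~1 and~2 (decomposable into 2 CNOTs and single-qubit gates) that moves $1/\sqrt{2}$ of the amplitude onto qubit~2, and finally a Toffoli-like step implemented by a single CNOT to correctly populate the $\ket{001}$ branch.

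For the lower bound $\mathcal{C}_b(\ket{\psi_{\text{GHZ}}}) \ge 2$, I would argue that zero CNOTs produce only product states, while a single CNOT sandwiched by arbitrary single-qubit unitaries yields a biseparable state (Schmidt rank $1$ across at least one bipartition), contradicting the fact that $\ket{\psi_{\text{GHZ}}}$ has Schmidt rank~$2$ across every bipartition.

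The main obstacle is the lower bound $\mathcal{C}_b(\ket{\psi_{\text{W}}}) \ge 3$. I would proceed by contradiction: suppose a two-CNOT circuit prepares a W-equivalent state. Since each CNOT acts on one of three pairs of qubits, two possibilities arise: (i) both CNOTs act on the same pair, in which case the remaining qubit is disentangled and two of the three reduced two-qubit density matrices are outright separable; or (ii) the two CNOTs share a common ``center'' qubit, so the circuit takes the form
\begin{equation*}
\ket{\psi} = L_3\,\mathrm{CNOT}_{bc}\,L_2\,\mathrm{CNOT}_{ab}\,L_1\ket{000},
\end{equation*}
with $L_1,L_2,L_3$ products of single-qubit unitaries and $b$ the center. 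I would then directly compute $\rho_{ac}=\mathrm{Tr}_b\ket{\psi}\!\bra{\psi}$ and observe that each of the two terms $\bra{0}_b\ket{\psi}$ and $\bra{1}_b\ket{\psi}$ factorizes as a product state on qubits $a$ and $c$, so $\rho_{ac}$ is a convex combination of two product states, hence separable. Thus in every case some two-qubit reduction of the prepared state is separable. On the other hand, a direct calculation shows $\mathrm{Tr}_i \ket{\psi_W}\!\bra{\psi_W}$ has strictly positive concurrence for every $i\in\{1,2,3\}$, so no state locally equivalent to $\ket{\psi_W}$ admits a separable two-qubit reduction. This contradiction establishes $\mathcal{C}_b(\ket{\psi_W})\ge 3$ and completes the proof.
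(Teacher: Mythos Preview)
Your proposal is correct, and it is considerably more detailed than what the paper actually offers. The paper does not prove this lemma in any real sense: it cites \v{Z}nidari\v{c} for the result and gives a one-sentence sketch, namely that two CNOTs are necessary to entangle three qubits, two suffice for GHZ-like states, and local unitaries cannot convert a GHZ-like state into W. The implicit (unproven) step in the paper's sketch is that every genuinely tripartite-entangled state reachable by two CNOTs is GHZ-like.

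Your route to the W lower bound is genuinely different. Rather than invoking the GHZ/W SLOCC dichotomy, you argue structurally: a two-CNOT circuit on three qubits either leaves one qubit untouched, or the two CNOTs share a common qubit $b$, and in the latter case the reduced state $\rho_{ac}$ is separable because the pre-trace state has the form $\ket{v_0}_b\otimes\ket{\text{prod}_0}_{ac}+\ket{v_1}_b\otimes\ket{\text{prod}_1}_{ac}$. Since every two-qubit marginal of $\ket{\psi_W}$ has positive concurrence, W cannot arise this way. This argument is self-contained and avoids having to classify all two-CNOT states; in effect it \emph{proves} the step the paper merely asserts. Two small points to tighten: (i) your claim that $\bra{0}_b\ket{\psi}$ and $\bra{1}_b\ket{\psi}$ are product holds only in the basis $\{U_b\ket{0},U_b\ket{1}\}$ determined by the final local layer $L_3$, not the computational basis---equivalently, absorb $U_b$ before tracing, or note that separability of $\rho_{ac}$ is invariant under local unitaries so you may drop $L_3$; (ii) your three-CNOT W construction is a bit loose (the ``Toffoli-like step with a single CNOT'' does not quite work as written), but a clean three-CNOT circuit exists: prepare the two-qubit state $(\ket{00}+\ket{01}+\ket{10})/\sqrt{3}$ on qubits $1,2$ with one CNOT plus single-qubit gates, set qubit $3$ to $\ket{1}$, then apply $\mathrm{CNOT}_{2\to3}$ and $\mathrm{CNOT}_{1\to3}$.
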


This can be proved in the following logic: two CNOT gates are necessary to entangle three qubits, and turns out to be also sufficient for GHZ-like states; however, it can be straightforwardly shown that any local unitary operation cannot transfer a GHZ-like state into W. In fact, in the m-partite case, a GHZ state has binding complexity $\Theta(m)$ whereas $\mathcal{C}_b$ for a W state is conjectured to be $\Theta(m^2)$ \cite{Balasubramanian:2018hsu}. The lower binding complexity provides an explanation for GHZ states' lack of robustness against local measurements. Furthermore, we may even take one more step and generalize the GHZ result to a class of so-called \emph{Schmidt decomposable} states:

\begin{definition}We say an $m$-partite quantum state is a Schmidt decomposable state if:
\begin{align}\label{deco}
    \ket{\psi} =\sum_{i} w_{i}\ket{\psi_{1}^{i}}\otimes\ket{\psi_{2}^{i}}\otimes\cdot \cdot \cdot\ket{\psi_{m}^{i}} 
\end{align}

where $\{\ket{\psi_{1}^{i}}\}$, $\{\ket{\psi_{2}^{i}}\}$, $\cdot \cdot \cdot$ $\{\ket{\psi_{m}^{i}}\}$ form orthonormal bases in their own subspaces.
\end{definition}


Schmidt decomposable states can be thought of as a more generic GHZ state where each party is allowed to possess multiple qubits, while all parties still share the same Schemidt spectrum. Like GHZ, Schmidt decomposable states' reduced density matrices become separable upon tracing out any subset of the parties. This property is, likewise, reflected in its $\mathcal{C}_b$'s linear dependence on $m$.

\begin{theorem}[Schmidt decomposable state preparation]\label{s-d}
A system with $n$ qubits is divided into $m$ parts. Any Schmidt decomposable state in a such system can be prepared with $O(mr_S)$ straddling gates, where $r_S$ denotes the state's Schmidt rank.
\end{theorem}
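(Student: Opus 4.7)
The plan is to mimic the three-step procedure used in the proof of Theorem~\ref{QD}, but iterate the entangling step across all $m$ parties. Concretely, I would fix the Schmidt decomposition $\ket{\psi} = \sum_i w_i\ket{\psi_1^i}\otimes\cdots\otimes\ket{\psi_m^i}$, choose a computational-basis encoding $\{\ket{A_j^i}\}_{i=1}^{r_S}$ of the $r_S$ Schmidt labels inside each party's Hilbert space using only $\lceil \log r_S\rceil$ qubits per party, and first prepare $\sum_i w_i\ket{A_1^i}\otimes\ket{0}^{\otimes(n-k_1)}$ entirely inside party~$1$. This preparation is free since it uses no straddling gates.

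Next, for $j=2,\dots,m$ I would successively promote the current state $\sum_i w_i \ket{A_1^i}\otimes\cdots\otimes\ket{A_{j-1}^i}\otimes\ket{0}\cdots\ket{0}$ to $\sum_i w_i\ket{A_1^i}\otimes\cdots\otimes\ket{A_j^i}\otimes\ket{0}\cdots\ket{0}$ by running the bipartite entangling procedure of Lemma~\ref{bqe} between the encoding register in party~$j-1$ (or any already-populated party) and the fresh encoding register in party~$j$. Since only $\lceil\log r_S\rceil$ qubits on each side carry nontrivial information, Lemma~\ref{bqe} delivers this step at a cost of $O(r_S)$ straddling gates, independent of the actual sizes $k_{j-1},k_j$. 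After $m-1$ such steps the state $\sum_i w_i\ket{A_1^i}\otimes\cdots\otimes\ket{A_m^i}$ is produced using $O(m r_S)$ straddling gates total.

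Finally, on each party $j$ I would apply the local unitary $\sum_i \ket{\psi_j^i}\bra{A_j^i}$, which is well-defined because $\{\ket{\psi_j^i}\}$ is orthonormal by the Schmidt-decomposability assumption and can therefore be completed to a unitary on $\mathcal{H}_j$. These final transformations are local and hence free, converting the intermediate state into the desired $\ket{\psi}$. Summing the costs across the three stages gives the advertised $O(m r_S)$ bound.

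The main obstacle I anticipate is the bookkeeping in the iterative entangling step: one has to verify carefully that at the $j$-th stage the relevant two subsystems look, up to free local operations, like the bipartite setup of Lemma~\ref{bqe} restricted to a $\log r_S$-qubit slice on each side, so that the $O(r_S)$ bound can be invoked without a hidden dependence on $k_j$ or on the parties already entangled. Once this reduction is made precise, the recursion collapses to a straightforward geometric sum, and the orthonormality clauses in the definition of Schmidt decomposable states guarantee that Step~3 costs nothing. A secondary subtlety is ensuring no ancilla qubits are used, which is handled automatically by encoding the Schmidt index into the $\lceil\log r_S\rceil$ lowest qubits of each party's native register rather than on auxiliary wires.
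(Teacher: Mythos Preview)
Your proposal is correct and follows essentially the same route as the paper: encode the Schmidt weights on the $\lceil\log r_S\rceil$-qubit register of one party, then repeat Step~2 of the Theorem~\ref{QD} algorithm $m-1$ times to copy the Schmidt index into each remaining party at cost $O(r_S)$ per party, and finish with the free local basis changes of Step~3. The paper's argument is the same extension of Theorem~\ref{QD}, phrased as ``encode Schmidt weights on any of the subsystems and entangle all remaining subsystems to it,'' with the observation that the multiplexor target bits never sit in the encoded register so the weights are preserved; your write-up is simply a more explicit version of that sketch.
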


We've seen that, for a generic state, the number of Schmidt weights grows as one performing Schmidt decomposition; it, however, does not grow with $m$ for a Schmidt decomposable state, preserving the one-to-one correspondence between any two local base vectors across sub-regions. Thus we could encode Schmidt weights on any of the subsystems and entangle all remaining subsystems to it. Specifically, one could simply extend the algorithm in Theorem \ref{QD} to all $m$ subsystems: in step 2, no multiplexor's target bit is within the encoded space; therefore, the Schmidt weights are untouched and transmitted to every party at a extra linear cost in $m$. To sum up, the linear growth in binding complexity suggests a state's multi-separability.


\section{Discussion} 
After inspecting several instances, perhaps the next emerging open question one could ask about binding complexity is whether it serves as a good measure for entanglement. There are four necessary conditions a measure should satisfy \cite{PhysRevLett.78.2275}:
\begin{enumerate}
    \item Zero for separable states.
    \item Invariant under local unitary transformations.
    \item Additive for tensor products.
    \item Non-increasing under classically coordinated local operations.
\end{enumerate}
For pure states, the first three conditions are evidently true by definition of binding complexity. Though the usage of and classical communication is not within the main scope of this work, it would be counter-intuitive to assume classical manipulations would increase the number of entanglement gate required to prepare a state. On the other hand, if binding complexity does increase with LOCC, it would be interesting to understand how that happens. Another crucial evidence is, in bipartite case $\mathcal{C}_b \sim r_S$, and the latter was shown to be non-increasing under LOCC \cite{terhal2000schmidt}. We therefore conjecture that binding complexity can be used as a measure to quantify entanglement, at least for pure states. Surely, a careful examination of binding complexity classes under the use of LOCC is worth exploring.

Beyond an attempt to solve an open problem, our work could provide inspirations in various ways: Firstly, looping back to the original motivation, binding complexity provides a tool to study quantum gravity. In the anti-de Sitter/conformal field theory correspondence, a widely-studied conjecture is that a black hole's interior can be mapped to a dual conformal quantum state, whose circuit complexity correspond to the black hole's interior volume, which will grow over time. However, if we consider the \emph{binding} complexity of a multiboundary wormhole, it corresponds to a invariant under local unitary evolution which might be used to characterize wormholes' properties.   

For example, consider the gravity side of the GHZ state, how can ``neglecting'' one party destroy the entanglement of the whole system? Leonard Susskind explains this property by hypothesising the existence of a non-classical ``GHZ-Brane'' localized in the wormhole that connects parties, and it only allows one to receive message sent by the union of all rest parties \cite{Susskind:2016jjb}. This is a local unitary-invariant property that is reflected in the GHZ state's linear binding complexity dependence with the number of parties. Consequentially, it is natural to conjecture the existence of ``GHZ-Brane'' between any Schmidt decomposable state, where the same multi-separable physical property can be observed. 

Moreover, as we discussed in the introduction, binding complexity emerges naturally from the perspective of distributed quantum computation. One might consider more refined versions of the problem, such as allowing the use of $\text{log}(n)$ ancillary qubits or allowing certain rate of quantum communication between the parties. In the near future, we envision that such studies on binding complexity variants could lead to optimizations in experimental designs. 

The author gratefully acknowledges Scott Aaronson for problem suggestion. We thank Vijay Balasubramanian, Arjun Kar, Andrew C. Potter, and Ruizhe Zhang for inspiring discussions. This work was supported by NSF Award No. DMR-2038032.
\bibliography{main.bib}
\end{document}